\pgfplotsset{compat=newest}
\newtheorem{theorem}{Theorem}
\newcommand{\vc}[1]{\mathbf{#1}}
\newcommand{\hb}{\vc{h}}
\newcommand{\gb}{\vc{g}}
\newcommand{\ab}{\vc{a}}
\newcommand{\Thetab}{\vc{\Theta}}
\newcommand{\wb}{\vc{w}}
\newcommand{\cb}{\vc{c}}
\newcommand{\xb}{\vc{x}}
\newcommand{\Sb}{\vc{S}}
\newcommand{\Xb}{\vc{X}}
\newcommand{\Tb}{\vc{T}}
\newcommand{\Fb}{\vc{F}}
\newcommand{\Wb}{\vc{W}}
\begin{document}

\title{\ \\ Secrecy Rate Maximization in Multi-IRS mmWave Networks \\}

\author{  Anahid Rafieifar, and S. Mohammad Razavizadeh
\\ 
School of Elec. Eng., Iran University of Science and Technology (IUST)\\
anahid{\_}rafieifar@elec.iust.ac.ir, smrazavi@iust.ac.ir}
\maketitle
\begin{abstract}
This paper investigates the problem of increasing the security at the physical layer of a millimeter Wave (mmWave) network equipped with several Intelligent Reflecting Surfaces (IRSs). In this network, multiple IRSs help the Base Station (BS) to reach the signal to the desired user and at the same time maintain the security of the network i.e. securing the signal from receiving by the unallowable eavesdropper. The target of the proposed scheme is to maximize the secrecy rate by jointly optimizing the active beamforming at the BS and passive beamforming at the IRSs. This leads to a non-convex optimization problem which we solve by decomposing into two sub-problems. The sub-problems alternatively solve the active and passive beamforming design problems using the Semi-Definite Relaxation (SDR) technique. Finally, simulations  are done to assess the performance of the proposed algorithm. These results show the superiority of using multiple IRSs in the enhancement of the secrecy rate in the wireless networks that operate in the mmWave frequency bands.
\end{abstract}

\begin{IEEEkeywords}
	Intelligent Reflecting Surface (IRS), Reconfigurable Intelligent surfaces (RIS), multiple IRSs, multi-IRS, multi-RIS, millimeter wave (mmWave) communications, physical layer security, eavesdropping.
\end{IEEEkeywords}
\newpage
\section{Introduction}
One of the key technologies in current and future generations of wireless networks is millimeter Wave (mmWave) communications \cite{hs2}. 
The main advantages of the mmWave frequency bands are higher bandwidth, higher frequency reuse, less interference, and smaller antenna array dimensions. 
Despite these benefits, mmWave communications are more sensitive to blockage and suffer from a large attenuation compared with lower frequencies \cite{hs2}.
On the other hand, Intelligent Reflecting Surface (IRS) or Reconfigurable Intelligent Surface (RIS) is a novel technology that has been recently proposed to improve the performance of wireless communication networks. \cite{hs3}. 
These are metasurfaces that can control the propagation direction of the wireless signals to focus them on a point of interest and improve the wireless link qualities \cite{emil1}. 
The IRSs can also be employed in the mmWave networks to compensate for the blockage effect and high path loss in these networks.  
\subsection{Related Work}
Passive beamforming design for the IRSs in the microwave bands is extensively studied in literature \cite{passive2,hs5}. 
Wang et al. in \cite{hs5}, propose an algorithm to maximize users' received power by joint active and passive beamforming in multi-IRSs assisted communications. 
In \cite{hs4} it is shown that an increase in the number of reflecting elements of an IRS leads to the BS’s power can be decreased quadratically.
Also in \cite{hs6}, an optimized beamforming and power allocation method is proposed for maximizing the total rate in the mmWave systems where multi-IRSs are employed. In \cite{conferance1}, the authors focused on minimizing BS's transmitted power and jointly optimizing active and passive beamforming.

\textcolor{black}{Due to the passive nature and a high number of reflecting elements in IRSs, channel estimation in the IRS-assisted systems becomes a challenging issue that has been investigated in some recent literature \cite{ref4, ref5, ref3}. In most of this literature, channels are partially estimated, and using the deep learning approach are extrapolated. In \cite{ref4}, the authors considered an IRS that is equipped with some active elements and is capable to acquire their partially Channel State Information (CSI). The locations of these active elements are derived based on probabilistic sampling theory. Two deep learning-based schemes, namely, channel extrapolation and beam searching are also designed. The authors in \cite{ref5} proposed an Ordinary Differential Equation based Convolutional Neural Network (ODE-CNN) approach to acquire full CSI from extrapolation of the partial CSI. But in contrast to \cite{ref4}, they only use passive elements and turn on the fractional part of IRSs’ elements. They also show that their proposed method can have a faster convergence rate and achieve a better solution than cascaded CNN. In \cite{ref3}, to reduce IRS channel training overhead in a system consists of a large number of BS antennas or IRS reflecting elements, three types of channel extrapolation namely, antenna, frequency, and physical terminal extrapolations are proposed.
} 

The problem of physical layer security in IRS-aided networks is also studied in recent literature. For example, joint optimizing of the BS beamforming and IRS phase shifts for secrecy rate maximization is considered in \cite{hs7,hs8,hs9}. Furthermore, using artificial noise for enhancing the secrecy rate of an IRS network is studied in \cite{hs10}. By employing a physical layer security approach, the authors in \cite{hs11} have jointly designed beamforming matrix and artificial noise signals at the base station, and the phase shifts of the IRS's to improve the sum rate of a multi-IRS system. 
An IRS-assisted mmWave network is studied in \cite{xiu2020secrecy} in which the network secrecy rate is enhanced by applying an alternating optimization-based method and joint optimization of the IRS phase shift and the beamforming at the transmitter. To do this end, the authors employed two approaches namely, element-wise block coordinate descent and successive convex approximation (SCA). The combination of physical layer security and channel estimation in an IRS-assisted system is also investigated \cite{ref1},\cite{ref2}.
 \textcolor{black}{The authors in \cite{ref1} perform secure beamforming by employing a deep reinforcement learning approach under the time-varying channel assumption. To improve the learning efficiency they use post-decision state and prioritized experience replay. Their results show a considerable improvement in secrecy rate and QoS satisfaction probability compared with traditional methods. Yang et al. considered an IRS-aided anti-jamming system, where IRS can mitigate the jamming interference by jointly power allocation at the BS and phase shift design at the IRS using the fuzzy win or learn fast-policy hill-climbing approach. Simulation results demonstrated the IRS-aided system enhances the system rate and transmission protection level compared with existing approaches \cite{ref2}.}

However, the problem of multi-IRS networks operating in the mmWave frequency bands is still an
open problem and on the other hand, the problem of secrecy rate maximization in multi-IRS mmWave has not been widely studied in the literature.

\subsection{Contribution}
In this paper, motivated by the following facts, we study the problem of physical layer security in a multi-IRS aided millimeter-wave network.
In this mmWave network, the IRSs are utilized to create multiple links between a multiple-antenna BS and a user, and at the same time, an eavesdropper listens to these signals. Both the user and eavesdropper utilize a single antenna for signal transmission and reception. 
The target of the proposed method is maximizing the network secrecy rate by jointly optimizing the beamforming coefficients at the base station and phase shifts at the IRSs. 
This results in a non-convex optimization problem and to solve it we design an efficient method that is based on alternating method and Semi-Definite Relaxation (SDR) technique.
Finally, we present simulation results to show the advantages of using multiple IRSs in the mmWave network for secrecy rate enhancement. 

Our main contributions can be  summarized as follows:
\begin{itemize}
	\item In most papers, a single IRS in a wireless network is considered, But we study a network equipped with multiple IRS. By increasing the number of IRSs in the mmWave band, the extra paths from the BS to the user are created, which improves the received power in the user and thus increases the secrecy rate. Due to the high path loss and blockage sensitivity in the mmWave frequency bands, it is more necessary to create more paths in this band compared to the microwave band.
	
\end{itemize}
\begin{itemize}
	\item In most papers, to increase the impact of IRSs on system performance, direct link between the BS and the intended receivers is ignored. But, since both the base station and reflecting surfaces are usually located at high altitudes, the probability that both of them have direct links is high. Thus, in this paper, we suggest a more practical setup in which direct channels between BS and user and also BS and eavesdropper are also considered to exist.
\end{itemize}
\begin{itemize}
	\item We propose a new solution method for the beamforming and IRS phases optimization in the multi-IRSs networks. In this method, we divide our non-convex optimization problem into two sub-problems. After that, by definition of proper variables, Semi-Definite Relaxation (SDR) and Charness-Cooper Transformation (CCT), we propose an algorithm to achieve a sub-optimal solution. Then, we evaluate the complexity and convergence of this algorithm.  
\end{itemize}

The rest of the paper is organized as follows. In Section II, the system and channel model, as well as problem formulation, are introduced. In Section III, the proposed algorithm for solving the optimization problem is proposed and in Section IV, we present simulation results. Finally, Section V demonstrates the paper's conclusions.


\section{System Model, Channel Model and Problem Formulation} 
\label{sec:background}

\subsection{System Model}
As shown in Fig.1, we consider the downlink of a mmWave network where a BS serves a user in the presence of an eavesdropper. 
The BS is equipped with $M$ antennas and the user and eavesdropper each has one antenna. 
Also, each $L$ IRS has $N$ reflecting elements that assist in secure communication from the BS to the user. The reflecting elements of the IRSs change the phase of the received signal and then forward it towards the user. \textcolor{black}{Although, our main focus is on the single-user network, it can be further extended to a multi-user scenario. It should be noted that this single-user model can be applied to practical wireless systems in which the users are separated in time or frequency domains and at each resource block (assigned a single user) an eavesdropper tries to secretly listen to the data. }

 The channel coefficients $\Fb_l\in\mathbb{C}^{N \times M}$ , $\hb_d\in\mathbb{C}^{M \times 1}$, $\gb_d\in\mathbb{C}^{M \times 1}$, $\hb_{rl}\in\mathbb{C}^{N \times 1}$, and $\gb_{rl}\in\mathbb{C}^{N \times 1}$ are addressed the channel between the bases station and the $l$th IRS, the bases station and the user, the bases station and the eavesdropper, the $l$th IRS and the user, and the $l$th IRS and eavesdropper, respectively. We consider the perfectly estimated CSI at the BS. 
   The transmitted signal from the base station is denoted by $s$ that is with zero mean
   and unit variance. The beamforming vector at the BS is represented by  $\wb$ that satisfies $||\wb||^2 \le P_{BS}$, where $P_{BS}$ is the maximum allowable transmit power of the bases station.
	\begin{figure}[t]
		\centering
		\includegraphics[width=0.7\textwidth]{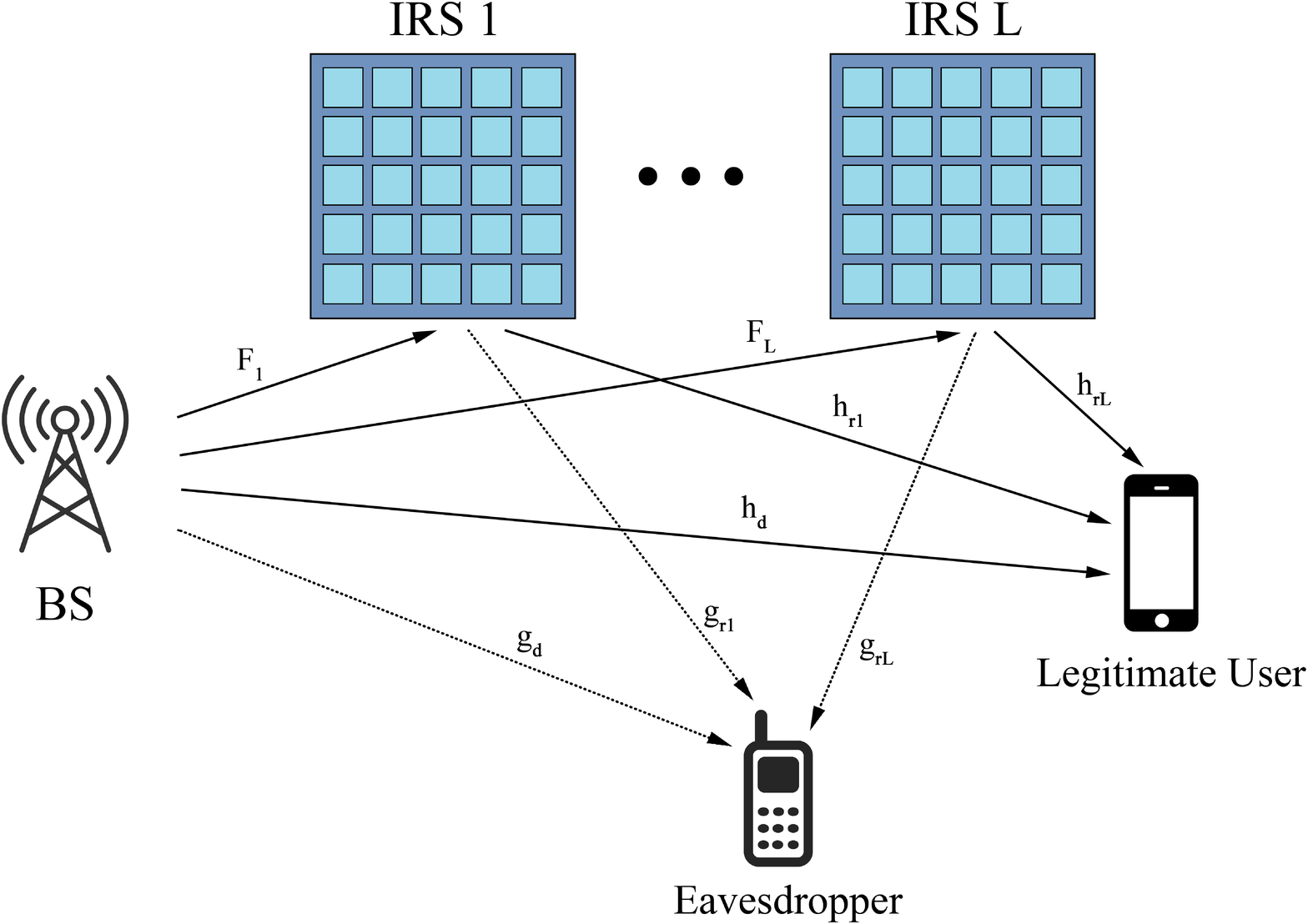}
		\caption{A multi IRS-aided mmWave communication network consisting a single legitimate user and an eavesdropper.}
		\label{fig_1}
	\end{figure}
	The received signal at the legitimate user can be written as
\begin{align}
y_r= \left( \sum_{l=1}^L \hb_{rl}^H \Thetab_l \Fb_l+\hb_d^H \right)  \wb s +n_r,
\end{align}
where $n_r$ with zero mean and variance $\sigma_r^2$ is Additive White Gaussian Noise (AWGN) at the user. \\ $\Thetab_l= \text{diag}  \{\:e^{j\theta_{l,1}}\:,\: e^{j\theta_{l,2}}, ...\: ,\: e^{j\theta_{l,N}}\: \} =\text{diag} \{\alpha_{l,1},\alpha_{l,2}, ... ,\alpha_{l,N}\} = \text{diag} (\boldsymbol{\alpha}_l)$ is  a diagonal matrix that denotes the effective phase shifts at the IRS elements and  $\theta_{l,n}\in[0,2\pi)$ is the phase shift of the $n$th reflecting element of the $l$th IRS. 
Similar to (1), the received signal at the eavesdropper is also obtained as
\begin{align}
y_e =\left( \sum_{l=1}^L \gb_{rl}^H \Thetab_l \Fb_l+\gb_d^H \right)  \wb s +n_e,
\end{align}
where $n_e$ is additive white Gaussian noise at the eavesdropper which has zero mean and its variance is $\sigma_e^2$.
\textcolor{black}{ In the system model that we consider in this paper, the effect of signal reflection between the IRSs is not considered. The reason is that this system is in the	mmWave frequency bands and the path loss is very severe in these frequencies. Hence, multiple reflections	by the IRS can be easily ignored.} 

Using (1) and (2), the achievable rates of the user and the eavesdropper are given by  
\begin{align}
R_r=\log_2(1+\frac{|\left(\sum_{l=1}^L \hb_{rl}^H \Thetab_l \Fb_l+\hb_d^H \right) \wb|^2}  {\sigma_r^2}),
\end{align}
and
\begin{align}
R_e=\log_2(1+\frac{|\left(\sum_{l=1}^L \gb_{rl}^H \Thetab_l \Fb_l+\gb_d^H\right)\wb|^2} {\sigma_e^2}),
\end{align}
respectively. Then,  the secrecy rate is obtained by $R_s = [R_r-R_e]^+$, where $[z]^+=$max$(z,0)$. Thus, the secrecy rate can be obtained as 
\begin{align}
R_s = [\log_2(1+\frac{|\left(\sum_{l=1}^L \hb_{rl}^H \Thetab_l \Fb_l+\hb_d^H\right)\wb|^2} {\sigma_r^2}) \nonumber\\
- \log_2(1+\frac{|\left(\sum_{l=1}^L \gb_{rl}^H \Thetab_l \Fb_l+\gb_d^H\right)\wb|^2} {\sigma_e^2})]^+.
\end{align}

\subsection{MmWave Channel Model}
The BS-to-user channel $\hb_d$ is modeled using a geometric channel model of the mmWave communications \cite{hs12} which can be expressed as 
\begin{align}
\hb_d = \frac{\sqrt M}{K} \sum_{k=1}^{K} \rho_{k,b}^u G_b  \ab_b(\phi_{k,b}^u),
\end{align}
where $K$ is total number of paths, $\rho_{k,b}^u \sim \mathcal{CN}(0,10^{-0.1PL(d)})$ is the complex gain of the $k$th path between BS and user, $G_b$ is the BS antenna gain. $\ab_b(\phi_{k,b}^u)$ is the normalized BS array response vector at an azimuth Angle of Departure AoD $\phi_{k,b}^u \in [0,2\pi]$.  The path loss can be obtained as \cite{hs13}
\begin{align}
PL(d) \: [dB] = \mu + 10 \kappa \: \log_{10} (d) + \xi,
\end{align}
in which $d$, $\mu$ and $\kappa$ denote the distance between transmitter and receiver, constant path loss term and path loss exponent, respectively. Also $ \xi \sim N(0,\sigma_{\xi}^2) $ where $\sigma_{\xi}^2$ is shadowing variance. 

Similar to (6), the BS-to-eavesdropper channel can be modeled by
\begin{align}
\gb_d = \frac{\sqrt M}{K} \sum_{k=1}^{K} \rho_{k,b}^e G_b \ab_b(\phi_{k,b}^e),
\end{align}
where  $\rho_{k,b}^e \sim \mathcal{CN}(0,10^{-0.1PL(d)})$ is the complex gain of the $k$th path between BS and eavesdropper. $\ab_b(\phi_{k,b}^e)$ is the normalized BS array response vector at an azimuth AoD $\phi_{k,b}^e \in [0,2\pi]$. 

With the assumption that the BS and IRSs are located at a higher altitude, the channels between them are assumed to be LOS dominant. Thus the BS-to-$l$th IRS channel can be modeled as a rank-one matrix \cite{hs6}
\begin{align}
\Fb_l = \sqrt{MN} \rho_b^l G_b \ab_l(\phi_l^b,\psi_l^b) \ab_b^H(\phi_b^l), \: l \in \{1,2,.
..,L\},
\end{align}
where $\rho_b^l\sim \mathcal{CN}(0,10^{-0.1PL(d)})$ denotes the complex gain of the BS-to-$l$th IRS channel that $PL(d)$ is derived according to (7).
 $\phi_l^b\in [0,\pi]$ is the elevation and azimuth Angles of Arrival (AoA) for the $l$th IRS and $\phi_b^l\in [0,2\pi]$ is the azimuth AoD for the BS. $\ab_l(\phi_l^b,\psi_l^b)$ is the normalized $l$th IRS array response vector, that is denoted by 
\begin{align}
\ab_l(\phi_l^b,\psi_l^b) = \ab_l^{az}(\psi_l^b) \otimes \ab_l^{el}(\phi_l^b),
\end{align}
where $\ab_l^{az}(\psi_l^b) $ and $\ab_l^{el}(\phi_l^b)$ are the horizontal and vertical array response vector of the $l$th IRS, respectively.
Also $\ab_b(\phi_b^l)$ is the normalized BS array response vector.

Similar to (6) and (8), the $l$th IRS to user and eavesdropper channels can be respectively given by 
\begin{align}
\hb_{rl} = \frac{\sqrt N}{K} \sum_{k=1}^{K} \rho_{k,l}^u G_l \ab_l(\phi_{k,l}^u,\psi_{k,l}^u),
\end{align}
and 
\begin{align}
\gb_{rl} = \frac{\sqrt N}{K} \sum_{k=1}^{K} \rho_{k,l}^e G_l \ab_l(\phi_{k,l}^e,\psi_{k,l}^e),
\end{align}
where $\rho_{k,l}^u \sim \mathcal{CN}(0,10^{-0.1PL(d)})$ and $\rho_{k,l}^e \sim \mathcal{CN}(0,10^{-0.1PL(d)})$ are the complex gain of the $k$th path of $l$th IRS-user channel and $l$th IRS-eavesdropper channel, respectively.
   $\phi_{k,l}^u\in [0,\pi]$ and $\psi_{k,l}^u\in [0,\pi]$ are the azimuth and elevation (AoD) for the $l$th IRS, respectively. Also $\phi_{k,l}^e\in [0,\pi]$ and $\psi_{k,l}^e\in [0,\pi]$ are the azimuth and elevation AoD for the $l$th IRS.
 $\ab_l(\phi_{k,l}^u,\psi_{k,l}^u)$ and $\ab_l(\phi_{k,l}^e,\psi_{k,l}^e)$ are the normalized array response vectors of the $l$th IRS associated with the IRS-user and the IRS-eavesdropper paths, respectively. Definition of $\ab_l(\phi_{k,l}^u,\psi_{k,l}^u)$ and $\ab_l(\phi_{k,l}^e,\psi_{k,l}^e)$ are similar to (10) based on Kronecker product of the $l$th IRS horizontal and vertical array response vectors.
\subsection{Problem Formulation}
Based on the aforementioned discussion, this paper focuses on maximizing the secrecy rate by jointly optimizing the transmit beamforming vector $\wb$ at the BS and the phase vectors $\{\boldsymbol{\alpha}_1,...,\boldsymbol{\alpha}_L\}$ at the IRSs, under the BS's maximum transmission power and unit modulus of the diagonal elements of  $\{\Thetab_1,....\Thetab_L\}$ constraints. This problem can be expressed as 
\begin{subequations}
\begin{align}
&P: \: \: \max_{\wb,\boldsymbol{\alpha}_1,\boldsymbol{\alpha}_2,...,\boldsymbol{\alpha}_L} \:\: R_s \\
&\quad\quad\quad \text{s.t.} \:\:\: ||\wb||^2\le P_{BS}\\
&\quad\quad\quad |\alpha_{l,n}| = 1, \: \:   \:  l \in \{1,...,L\}, \: n \in \{1,...,N\} . 
\end{align}
\end{subequations}
Because of the coupled variables $\wb$ and $\{\boldsymbol{\alpha}_1,...,\boldsymbol{\alpha}_L\}$ in the objective and constraint functions, this problem $(P)$ is a non-convex problem.

\section{Proposed Algorithm And Beamforming Design}  
By decomposing  the optimization variables in (13) into two individual subsets of $\wb$ and $\{\boldsymbol{\alpha}_1,...,\boldsymbol{\alpha}_L\}$, it is found that (13b) and (13c) represents two disjoint sets on $\wb$ and  $\{\boldsymbol{\alpha}_1,...,\boldsymbol{\alpha}_L\}$, respectively. Thus, the problem can be solved alternatively by maximization of $\wb$ and $\{\boldsymbol{\alpha}_1,...,\boldsymbol{\alpha}_L\}$ in an iterative manner through two disjoint sub-problems. At each iteration, first for a given value of $\{\boldsymbol{\alpha}_1,...,\boldsymbol{\alpha}_L\}$, a solution for optimized $\wb$ is derived. Then, the second sub-problem is solved for the optimized $\wb$ and this procedure is iterated until convergence. Both sub-problems are solved using the SDR technique and CCT.
 \textcolor{black}{To extend the following proposed solution algorithm to the multi-user case, the most challenging part is to handle the interference among users. However, there are some ideas to overcome this challenge such as the transformation of objective to constraints and using successive convex approximation method \cite{SCA}.}

\subsection{Sub-Problem 1}
First, we assume that the parameters  $\{\boldsymbol{\alpha}_1,...,\boldsymbol{\alpha}_L\}$ are fixed and derive the optimal value of $\wb$. To this end, the objective function in problem $(P)$ is reformulated, and then considering that the log function is monotonically increasing, it is removed. Therefore,  sub-problem 1 will be
\begin{subequations}
	\begin{align}
		& \: \: \max_{\wb} \:\: (\frac {1+\frac{1}{\sigma_r^2} |\left(\sum_{l=1}^L \hb_{rl}^H \Thetab_l \Fb_l+\hb_d^H \right) \wb|^2 } {1+\frac{1}{\sigma_e^2} |\left(\sum_{l=1}^L \gb_{rl}^H \Thetab_l \Fb_l+\gb_d^H\right)\wb|^2} )\\
		&\quad\quad\quad \text{s.t.} \:\:\: ||\wb||^2\le P_{BS}. 
	\end{align}
\end{subequations}
By defining $\hb_{user}=\left(\sum_{l=1}^L \hb_{rl}^H \Thetab_l \Fb_l+\hb_d^H \right)\in\mathbb{C}^{1 \times M}$, $\gb_{eve}=\left(\sum_{l=1}^L \gb_{rl}^H \Thetab_l \Fb_l+\gb_d^H\right)\in\mathbb{C}^{1 \times M}$ and $\Wb = \wb \wb^H\in\mathbb{C}^{M \times M}$, we have 
\begin{align}
|\left(\sum_{l=1}^L \hb_{rl}^H \Thetab_l \Fb_l+\hb_d^H \right) \wb|^2 = \hb_{user} \Wb \hb_{user}^H
\end{align}
and 
\begin{align}
|\left(\sum_{l=1}^L \gb_{rl}^H \Thetab_l \Fb_l+\gb_d^H\right)\wb|^2 = \gb_{eve} \Wb \gb_{eve}^H.
\end{align}
We can rewrite (14) into a linear fractional problem as 
\begin{subequations}
	\begin{align}
	& \: P.1 : \: \max_{\Wb} \:\: (\frac {1+\frac{1}{\sigma_r^2}  \hb_{user} \Wb \hb_{user}^H } {1+\frac{1}{\sigma_e^2} \gb_{eve} \Wb \gb_{eve}^H} )\\
	&\quad\quad\quad \text{s.t.} \:\:\: \text{Tr}(\Wb)\le P_{BS}\\
	&\quad\quad\quad \Wb \succeq 0 \\
		&\quad\quad\quad \text{Rank}(\Wb)=1.
	\end{align}
\end{subequations}
\textcolor{black}{In this problem, we first use SDR to relax the rank-one constraint (17d). Then by employing the CCT method, $\Wb$ is obtained, and finally,  to satisfy the (17d), a rank one approximation of $\Wb$ is derived using the Gaussian randomization method.} \\ 
By dropping (17d), using the CCT
and defining $\gamma = \frac{1} {1+\frac{1}{\sigma_e^2}  (\gb_{eve} \Wb \gb_{eve} ^H)}$ and $ \Tb = \gamma \Wb $, (17) is transformed into the following non-fractional problem 
	\begin{subequations}
	\begin{align}
	\max_{\Tb, \gamma}  \gamma + \frac{1}{\sigma_r^2}  ( \hb_{user} \Tb  \hb_{user} ^H ) \quad\quad\\ 
	\text{s.t.} \quad \gamma+\frac{1}{\sigma_e^2}(\gb_{eve}\Tb\gb_{eve}^H) = 1 \\
	 \text{Tr}(\Tb) \le \gamma P_{BS} \\
	\Tb \succeq 0, \gamma \ge 0.
	\end{align}
\end{subequations}
The problem in (18) is a Semi-Definite Programming (SDP) problem and therefore it is convex and CVX can be used to solve such a problem. After solving (18), $\Wb$ is obtained as $T/\gamma$. Then, to satisfy the constraint Rank$(\Wb)=1$, according to \cite{gaussian} we use the standard Gaussian randomization method to achieve an approximation for $\wb$. 
\subsection{Sub-Problem 2}
	In the next step, for a given $\wb$, the problem $(P)$ is transformed to sub-problem 2 that can be formulated as follows \cite{hs9}
   	\begin{align}
   \max_{\boldsymbol{\alpha}_1,\boldsymbol{\alpha}_2,...,\boldsymbol{\alpha}_L} \frac{(1+\frac{1}{\sigma_r^2}|\left(\sum_{l=1}^L \hb_{rl}^H \Thetab_l \Fb_l+\hb_d^H\right)\wb|^2)} {(1+\frac{1}{\sigma_e^2}|\left(\sum_{l=1}^L \gb_{rl}^H \Thetab_l \Fb_l+\gb_d^H\right)\wb|^2)} \quad\quad\quad\quad \nonumber\\
   \text{s.t.} \: \: \: |\alpha_{l,n}| = 1, \: \:   \:  l \in \{1,2,...,L\}, \: n \in \{1,2,...,N\} . 
   \end{align}

   \textcolor{black}{Then, using the equality $\cb \Thetab_l  = \boldsymbol{\alpha_l}^T$ diag$(\cb)$, where $\Thetab$ is analog beamforming matrix and $\boldsymbol\alpha_l$ is the phase vector of the $l$-th IRS, and also defining $\cb=\hb_{rl}^H$ and $\cb=\gb_{rl}^H$, we can write  
    	\begin{align}
    \sum_{l=1}^L \left(\hb_{rl}^H \Thetab_l \Fb_l + \frac{1}{L}\hb_d^H\right)\wb=\sum_{l=1}^L\left(\boldsymbol{\alpha_l}^T \text{diag} (\hb_{rl}^H)\Fb_l+\frac{1}{L}\hb_d^H\right)\wb
    	\end{align} 
	\begin{align}
	\sum_{l=1}^L \left(\gb_{rl}^H \Thetab_l \Fb_l+\frac{1}{L}\gb_d^H\right)\wb=\sum_{l=1}^L \left(\boldsymbol{\alpha_l}^T \text{diag} (\gb_{rl}^H)\Fb_l+\frac{1}{L}\gb_d^H\right)\wb.
	\end{align}
}
\textcolor{black}{Then, (20) and (21) can be further simplified as follows
	\begin{align}
	\sum_{l=1}^L \left(\hb_{rl}^H \Thetab_l \Fb_l + \frac{1}{L}\hb_d^H\right)\wb = \xb_l^H \hb^\prime_{user,l}
	\end{align}
	\begin{align}
	\sum_{l=1}^L \left(\gb_{rl}^H \Thetab_l \Fb_l+\frac{1}{L}\gb_d^H\right)\wb =\xb_l^H \gb^\prime_{eve,l}
	\end{align}
}
where	 $\xb_l=[\boldsymbol{\alpha_l}^T,\frac{1}{L}]^H\in\mathbb{C}^{(N+1) \times 1}$, and  
\begin{subequations}
	\begin{align}
	\hb^\prime_{user,l} = 
	\begin{pmatrix}
\text{diag}(\hb_{rl}^H) \Fb_l \wb  \\
	\hb_d^H \wb 
	\end{pmatrix} \in\mathbb{C}^{(N+1) \times 1}\\
	\gb^\prime_{eve,l} = 
	\begin{pmatrix}
\text{diag}(\gb_{rl}^H) \Fb_l \wb  \\
	\gb_d^H \wb 
	\end{pmatrix} \in\mathbb{C}^{(N+1) \times 1}.
	\end{align}
\end{subequations} 
\textcolor{black}{By substitution of (22) and (23) into objective function of (19), sub-problem 2 can be reformulated as} 
    \begin{subequations}
	\begin{align}
	\max_{\xb_1,\xb_2,...,\xb_L} \: \: \: \frac{(1+\frac{1}{\sigma_r^2}\sum_{l=1}^L |\xb_l^H \hb^\prime_{user,l}|^2)} {(1+\frac{1}{\sigma_e^2}\sum_{l=1}^L |\xb_l^H \gb^\prime_{eve,l}|^2)}\\
	\quad\quad\quad\quad \text{s.t.} \: \: \: \xb_l^H \Sb_i \xb_l = 1,\forall i,l \quad\quad\quad\quad\quad\quad\quad\quad
	\end{align}
\end{subequations}
\textcolor{black}{where $\Sb_i$ is defined to satisfy the unit modules constraint of the IRS element gain and its $(k,j)$th element is denoted by} 
    	\begin{align}
    	[\Sb_i]_{k,j} = \left\{
    	\begin{array}{lr}
    	1 &i = k = j , \: i \in \{1,...,L-1\} \\
    	L^2 & i = k = j = L \quad\quad\quad\quad\quad\quad \\
    	0 &\text{otherwise} \quad\quad\quad\quad\quad \quad\quad\quad
    	\end{array}\right.
    	\end{align}
The objective function (25a) is a quadratically fractional function and the constraint (25b) is quadratic equality and non-convex and hence, the problem (25) is not a convex problem. 

By defining $\Xb_l = \xb_l \xb_l^H $, the numerator of (25a) is expressed
as
\begin{IEEEeqnarray}{lCr}
1+\frac{1}{\sigma_r^2}\sum_{l=1}^L |\xb_l^H \hb^\prime_{user,l}|^2 = 1+\frac{1}{\sigma_r^2}\sum_{l=1}^L \hb_{user,l}^{\prime H} \xb_l \xb_l^H \hb^\prime_{user,l}= \nonumber\\  1+\frac{1}{\sigma_r^2}\sum_{l=1}^L \hb_{user,l}^{\prime H} \Xb_l \hb^\prime_{user,l} \quad\quad\quad\quad\quad\quad\quad 
\end{IEEEeqnarray}
and denominator of (25b) is expressed as 
\begin{IEEEeqnarray}{lCr}
1+\frac{1}{\sigma_e^2}\sum_{l=1}^L |\xb_l^H \gb^\prime_{eve,l}|^2 = 1+\frac{1}{\sigma_e^2}\sum_{l=1}^L \gb_{eve,l}^{\prime H} \xb_l \xb_l^H \gb^\prime_{eve,l}= \nonumber \\  1+\frac{1}{\sigma_e^2}\sum_{l=1}^L \gb_{eve,l}^{\prime H} \Xb_l \gb^\prime_{eve,l}  
\end{IEEEeqnarray}
According to (28) and (29), we rewrite (25) into a linear fractional problem as 
   \begin{subequations}
\begin{align}
\max_{\Xb_1,...,\Xb_L } \: \: \frac{1+\frac{1}{\sigma_r^2} \sum_{l=1}^L(\hb_{user,l}^{\prime H} \Xb_l \hb^\prime_{user,l})} {1+\frac{1}{\sigma_e^2} \sum_{l=1}^L(\gb_{eve,l}^{\prime H} \Xb_l \gb^\prime_{eve,l})} \\ 
\text{s.t.} \: \: \: \text{tr}(\Sb_i \Xb_l) = 1 , \forall i,l,  \quad\quad\quad\quad\\
\text{Rank}(\Xb_l) = 1, \forall l, \quad\quad\quad\quad \:\: \: \\ 
\Xb_l \succeq  0.  \quad\quad\quad\quad\quad\quad\quad\quad
\end{align}
      \end{subequations} 
  \textcolor{black}{Similar to the first sub-problem, we drop rank one constraint, then use CCT to obtain $\Xb_l, \forall l$ and finally by employing the Gaussian randomization method, we find an approximated rank-one solution.}
  
In the following, by dropping (27c), using the CCT and defining $\lambda = 1 / (1+\frac{1}{\sigma_e^2} \sum_{l=1}^L(\gb_{eve,l}^{\prime H} \Xb_l \gb^\prime_{eve,l}))$ and $ \Tb_l = \lambda \Xb_l $, (30) is transformed into the following non-fractional problem   
  \begin{subequations}
  	\begin{align}
  	P.2 : \: \max_{\Tb_1,...,\Tb_L , \lambda }  \lambda+\frac{1}{\sigma_r^2} \sum_{l=1}^L(\hb_{user,l}^{\prime H} \Tb_l \hb^\prime_{user,l}) \quad\quad\\ 
  	\text{s.t.} \quad \lambda+\frac{1}{\sigma_e^2} \sum_{l=1}^L(\gb_{eve,l}^{\prime H} \Tb_l \gb^\prime_{eve,l}) = 1  \\ 
  	\text{tr}(\Sb_i \Tb_l) =  \lambda , \: \forall i,l \quad\quad\quad\quad\quad\quad\\ 
  	\Tb_l \succeq 0, \forall l\quad\quad\quad\quad\quad\quad\quad\quad\quad\\
  	\lambda \ge 0  \quad\quad\quad\quad\quad\quad\quad\quad\quad
  	\end{align}
  \end{subequations} 
The problem in (31) is an SDP problem and therefore a convex optimization problem.
After solving (31), $\Xb_l$ is obtained as $\Tb_l/\lambda$. Similar to $\wb$, an approximate solution to $\xb_l$ can be obtained by standard Gaussian randomization. 

Our final proposed algorithm to find the solution of problem $P$ is presented in \textbf{Algorithm 1}.
   
   \begin{algorithm}[H]
   	\caption{Proposed alternating iterative algorithm for solving P.}
   	\begin{algorithmic}[1]
   		\renewcommand{\algorithmicensure}{\textbf{Output:}}
   		\renewcommand{\algorithmicrequire}{\textbf{Initialization:}}
   		\ENSURE  $ {\bf{w}},{\Thetab_1},{\Thetab_2},...,{\Thetab_L}, R_s. $			
   		\REQUIRE $i = 0,\wb^{(0)},{\Thetab_1^{(0)}}={\Thetab_2^{(0)}}=...={\Thetab_L^{(0)}},$
   		$\varepsilon  = {10^{ - 3}}.$
   		\STATE \textbf{Repeat}
   		\STATE Set $ i=i+1 $.
   		\STATE Using $ {\Thetab_1^{(i-1)}},{\Thetab_2^{(i-1)}},...,{\Thetab_L^{(i-1)}} $,  Solve $P.1$ and obtain ${\bf{T}^{(i)}}$ and ${\bf{\gamma}^{(i)}}$.
   		\STATE Set ${\bf{W}^{(i)}}={\bf{T}^{(i)}}/{\bf{\gamma}^{(i)}}$ and 
   		derive ${\bf{w}^{(i)}}$ by employing the Gaussian Randomization method. 
   		\STATE With given ${\bf{w}^{(i)}}$, Solve the $P.2$ and find $\Tb_1^{(i)},\Tb_2^{(i)},...,\Tb_L^{(i)}$ and ${\bf{\lambda}^{(i)}}$.
   		  \STATE Set $\Xb_l^{(i)}=\Tb_l^{(i)}/{\bf{\lambda}^{(i)}}$; $\forall l \in \{1,2,,,,L\}$ and derive ${\Thetab_1^{(i)}},{\Thetab_2^{(i)}},...,{\Thetab_L^{(i)}} $ by employing the Gaussian Randomization method.
   		\STATE \textbf{Until} $ \left| {\frac{{ R_s^{(i)} - R_s^{(i-1)}}}{{R_s^{(i-1)}}}} \right| \le {\varepsilon } $.
   	\end{algorithmic}
   \end{algorithm}

\subsection{Convergence of the Proposed Algorithm}
 In the following theorem, we discuss the convergence behavior of our proposed algorithm.
 \begin{theorem}
 	Our proposed algorithm converges to a finite value in a non-decreasing fashion.	
 \end{theorem}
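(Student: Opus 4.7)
The plan is to apply the standard monotone bounded convergence argument for block coordinate ascent: show (a) that each of the two sub-problem updates in Algorithm~1 produces an objective value no smaller than the previous one, and (b) that the sequence $\{R_s^{(i)}\}$ is bounded from above. The two properties together imply convergence to a finite limit by the monotone convergence theorem.

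For (a), I would first examine the beamforming update. With $\{\Thetab_l^{(i-1)}\}$ held fixed, sub-problem $P.1$ (after SDR and the CCT reformulation) is a convex SDP solved to global optimality. Since $\Wb^{(i-1)} = \wb^{(i-1)}(\wb^{(i-1)})^H$ together with the associated $\gamma^{(i-1)}$ is feasible for the relaxed problem, the returned pair $(\Tb^{(i)},\gamma^{(i)})$ attains an objective at least as large, giving
\begin{equation*}
R_s\!\left(\wb^{(i)},\{\Thetab_l^{(i-1)}\}\right) \ge R_s\!\left(\wb^{(i-1)},\{\Thetab_l^{(i-1)}\}\right).
\end{equation*}
A symmetric argument, holding $\wb^{(i)}$ fixed while updating the phases through $P.2$ and using the feasibility of the rank-one lift of $\{\boldsymbol{\alpha}_l^{(i-1)}\}$, yields $R_s^{(i)} \ge R_s\!\left(\wb^{(i)},\{\Thetab_l^{(i-1)}\}\right)$. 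Chaining the two inequalities gives $R_s^{(i)} \ge R_s^{(i-1)}$.

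For (b), since $R_e \ge 0$ we have $R_s \le R_r$. The effective legitimate channel gain $\bigl|\bigl(\sum_l \hb_{rl}^H \Thetab_l \Fb_l + \hb_d^H\bigr)\wb\bigr|^2$ is upper bounded by the product of the squared norms of the (fixed per realization) channel factors, the unit-modulus $\Thetab_l$'s, and $\|\wb\|^2 \le P_{BS}$, so $R_r$ is uniformly bounded. Combined with (a), the monotone convergence theorem forces $R_s^{(i)}$ to a finite limit.

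The main obstacle is the Gaussian randomization step that extracts a rank-one $\wb^{(i)}$ from $\Wb^{(i)}=\Tb^{(i)}/\gamma^{(i)}$ (and $\Thetab_l^{(i)}$ from $\Xb_l^{(i)}=\Tb_l^{(i)}/\lambda^{(i)}$). A random rank-one candidate need not match the SDR optimum, so it could in principle drop below the previous iterate and break the per-step inequality in (a). I would close this gap with a \emph{retain-the-best} convention inside each sub-problem: among the randomization candidates accept the one whose sub-problem objective is largest, and fall back to the previous iterate whenever every candidate is worse (the previous iterate is always feasible for the current sub-problem and thus a valid fallback). With this convention the non-decrease property in (a) is preserved, and together with the upper bound in (b) the theorem follows.
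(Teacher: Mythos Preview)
Your argument is essentially the same as the paper's: chain the two per-sub-problem inequalities to get monotonicity, then invoke a finite upper bound from the limited power/antenna/IRS resources to conclude convergence. Your treatment is in fact more careful than the paper's, which simply asserts that each sub-problem ``optimizes'' its block and does not address the possibility that Gaussian randomization could degrade the iterate; your retain-the-best fallback is a legitimate way to close that gap, but be aware the paper itself does not include it.
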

 \begin{proof} 
 	Based on the following equation we find that the objective function has a non-decreasing behavior at the successive iterations.
 	\begin{align}
 		R_s(\bf{W}^{(i-1)},{\Thetab_0^{(i-1)}},...,{\Thetab_L^{(i-1)}}) \le R_s(\bf{W}^{(i)},{\Thetab_0^{(i-1)}},...,{\Thetab_L^{(i-1)}}) \le R_s(\bf{W}^{(i)},{\Thetab_0^{(i)}},...,{\Thetab_L^{(i)}})
 	\end{align}
 	The first inequality follows the fact that at the i-th iteration in the first sub-problem, for the given $\Thetab$ which is derived from the last iteration, $\bf{W}$ is optimized, thus the objective function improves. In the second inequality, this process is repeated for the given  $\bf{W}$ and the objective function grows by optimizing $\Thetab$. Due to the limited resource such as power, the number of antennas, and IRSs, the final value of the secrecy rate is upper bounded and the algorithm converges to a finite value in a non-decreasing fashion.	
 \end{proof}

\subsection{Complexity of the Proposed Algorithm}
Algorithm 1 requires solving two convex problems (18) and (30), at each iteration. The complexity of steps (3) and (4), which is related to the convex problem (18), is
$\mathcal{O}(M^{3.5})$ \cite{polik2010interior}. The complexity of steps (5) and (6) that express the convex problem (30), is $\mathcal{O}(L(N+1)^{3.5})$. As a result, the total complexity of  Algorithm 1  is $\mathcal{O}(I_{itr}(M^{3.5}+L(N+1)^{3.5}))$, where $I_{itr}$ represents the iterations number until the convergence criterion is met. \textcolor{black}{This shows that by employing multiple IRSs, computational complexity increases linearly.} 
   
\section{Simulation Results And Discussions}
In this section, we present the simulation results of evaluating our proposed algorithm and indicate the advantages of using  multiple IRSs in improving the secrecy rate of the mmWave networks and also the effect of optimal designing of these surfaces. 
In our simulations, we assume that a BS with $M$ antennas is located at the center of the polar coordinates. In addition, $L$  IRSs each with $N$ reflecting elements are installed around the BS at the fixed locations to assist in signal transmissions. We assume that the IRSs are located on a circle centered at the BS but with different angle. The $ L $ ($ L \in \{1,3,5\} $) IRSs are placed at $(25\cos(\frac{\pi} {4} + i \frac{\pi}{12}),25\sin(\frac{\pi} {4} + i \frac{\pi}{12}))$ for all $i \in \{-\frac{(L-1)}{2},-\frac{(L-3)}{2},...,0,...,\frac{(L-3)}{2},\frac{(L-1)}{2}\}$. Also the user and eavesdropper are located at $(20 , \beta)$ and $(18 , \beta)$, respectively, where $\beta \sim U[0,\pi/2] $. The noise variances are set as $\sigma_r^2=\sigma_r^2=-95$dBm. $PL(d)[dB]$ is calculated as in (6), (8), (11) and (12), based on $\mu=72$, $\kappa=2.92$ and $\sigma_{\zeta}=8.7$ and for (9) based on $\mu=61.4$, $\kappa=2$ and $\sigma_{\zeta}=5.8$ \cite{hs13}.
 \clearpage
\begin{figure}[h]
	\centering
	\includegraphics[width=12cm, height=9cm]{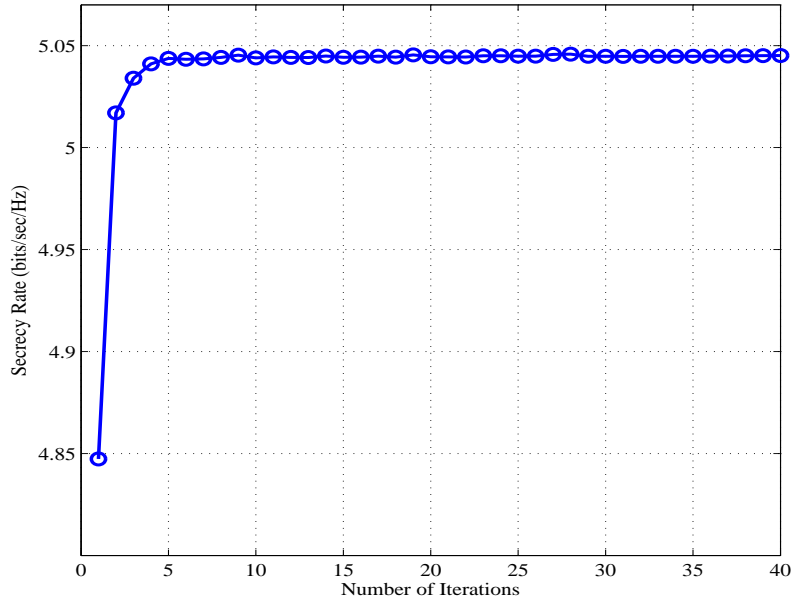}
	\caption{ Convergence of our proposed algorithm}
	\label{fig_1}
\end{figure}
Maximum Ratio Transmission (MRT) beamforming and No-IRS system are selected as benchmarks for our system performance evaluation. In MRT, conjugate of the channel between the BS and legitimate user is considered as beamforming vector. After beamforming determination, the reflecting elements phase shifts are obtained using Algorithm 1. Both optimization 
	\begin{figure}[hb]
		\centering
		\includegraphics[width=12cm, height=9cm]{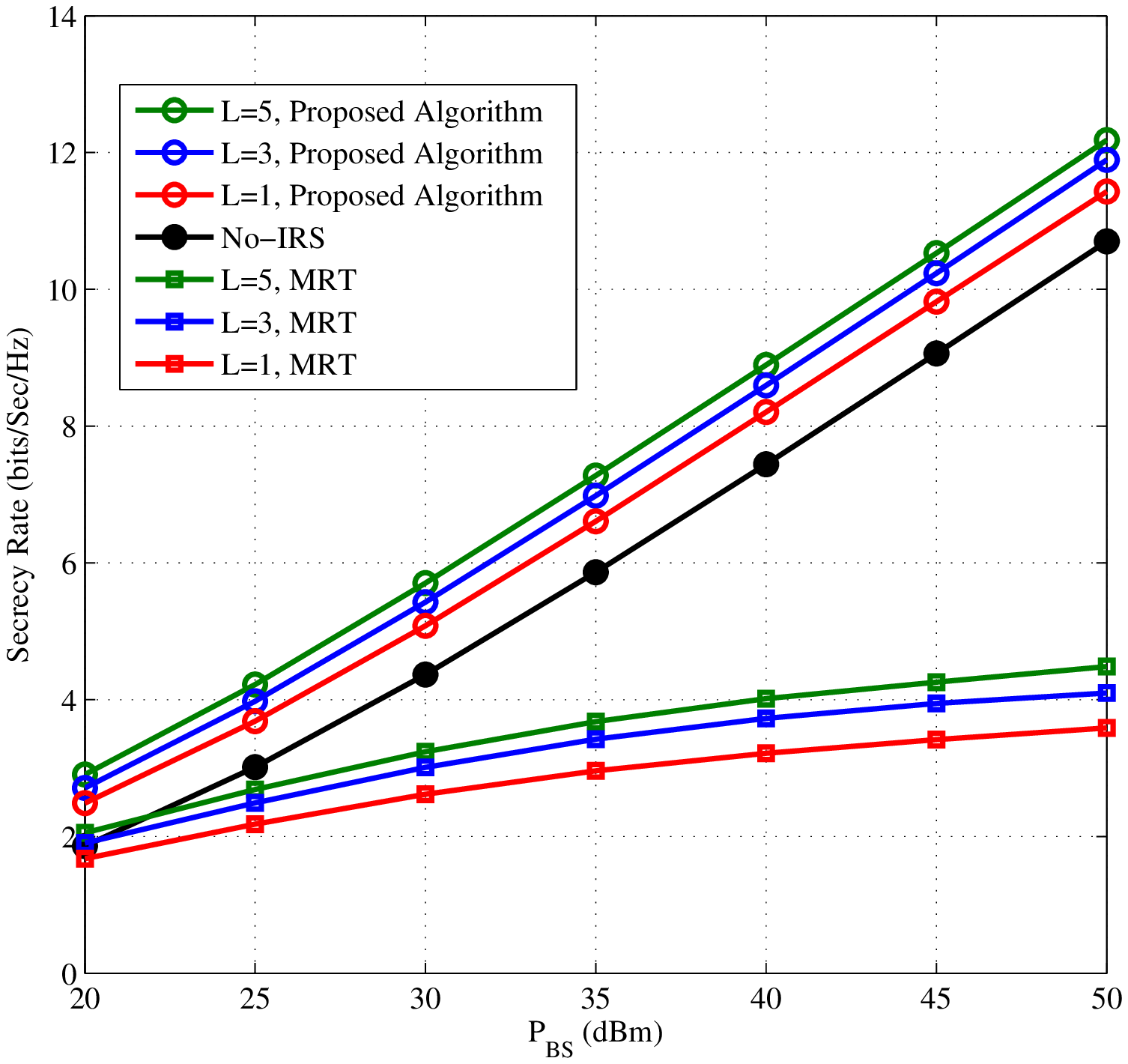}
		\caption{\textcolor{black}{Secrecy rate versus the BS's transmission power ($P_{BS}$) for different number of IRSs (with M=4, N=16)}}
		\label{fig_1}
	\end{figure}
 sub-problems are solved using the CVX optimization toolbox and all the results are averaged over 1000 iterations. 

In Fig. 2, the convergence behavior of our proposed algorithm (Algorithm 1) is evaluated. As we discussed in the previous section, the secrecy rate is a non-decreasing function with respect to the number of iterations, thus as it is shown in this figure, by increasing the number of iterations, the secrecy rate also increases and converges to its maximum value after about five iterations.
  \begin{figure}[t]
	\centering
	\includegraphics[width=12cm, height=10cm]{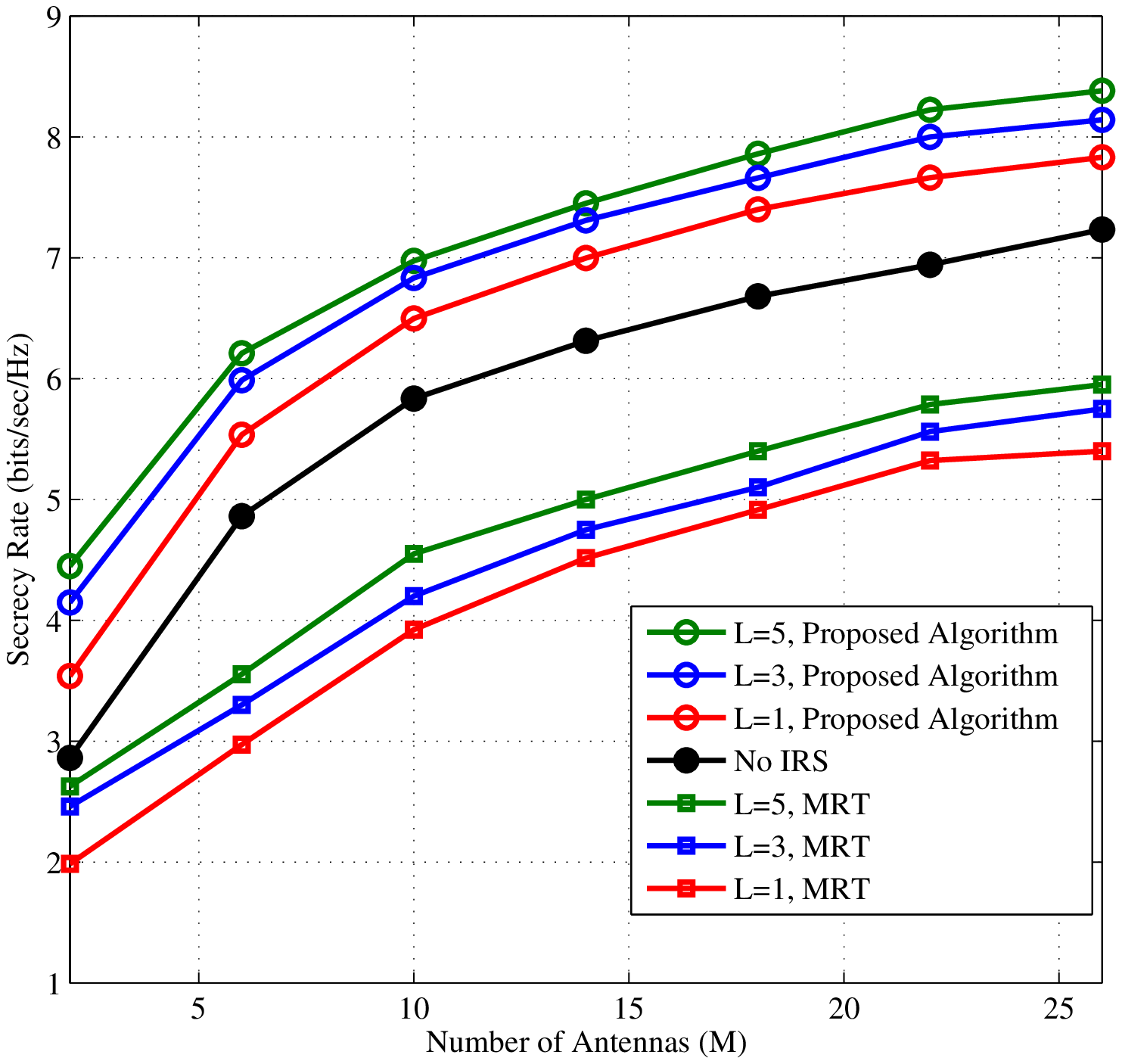}
	\caption{\textcolor{black}{Secrecy rate versus the number of BS's antennas for different number of IRSs (with $P_{BS}$ = 35dBm).}}
	\label{fig_1}
\end{figure}

Fig. 3 shows secrecy rate against BS's transmit power for three different numbers of IRSs $(L=1,3,5)$ \textcolor{black}{as well as no-IRS case}, where the number of antennas ($ M$) and reflecting elements in each IRS ($ N $) are set to 4 and 16, respectively. It can be seen that an increase in the number of IRSs results in secrecy rate improvement. This is because a large number of IRSs can lead to strengthening the signal at the user and better user's signal suppression at the eavesdropper. \textcolor{black}{Another interesting point is that the gap between curves decreases by adding more IRSs which means that change the system from a no-IRS case to the single IRS case has the best system performance improvement and by adding more IRS the value of this improvement is reduced. Since in this paper we tried to consider more practical concerns, we didn't ignore the effect of the channel between the BS and legitimate user and eavesdropper which may have the same probability of existence as the IRS channels. Furthermore, due to BS has multiple antennas and can perform active beamforming, the great significance of the BS beamforming with respect to IRS beamforming can be seen by comparing of the no-IRS curve and MRT related curves. Actually, when MRT is used, the objective to perform active beamforming isn't the secrecy rate, and also channel between BS and IRSs are not considered. As a result, the power received by the IRS reduced and they do not play important role in secrecy rate improvement. Therefore, the no-IRS case has better performance than the MRT case with IRS. In contrast to the aforementioned drawbacks of MRT, it has two main advantages. Firstly, this beamforming has lower computational complexity with respect to the optimal beamforming. Secondly, this method can separate IRSs and BS in solving optimization problems which leads to optimization problems that can be handled in a distributed manner.} 
 

 \begin{figure}[t]
 	\centering
 	\includegraphics[width=12cm, height=10cm]{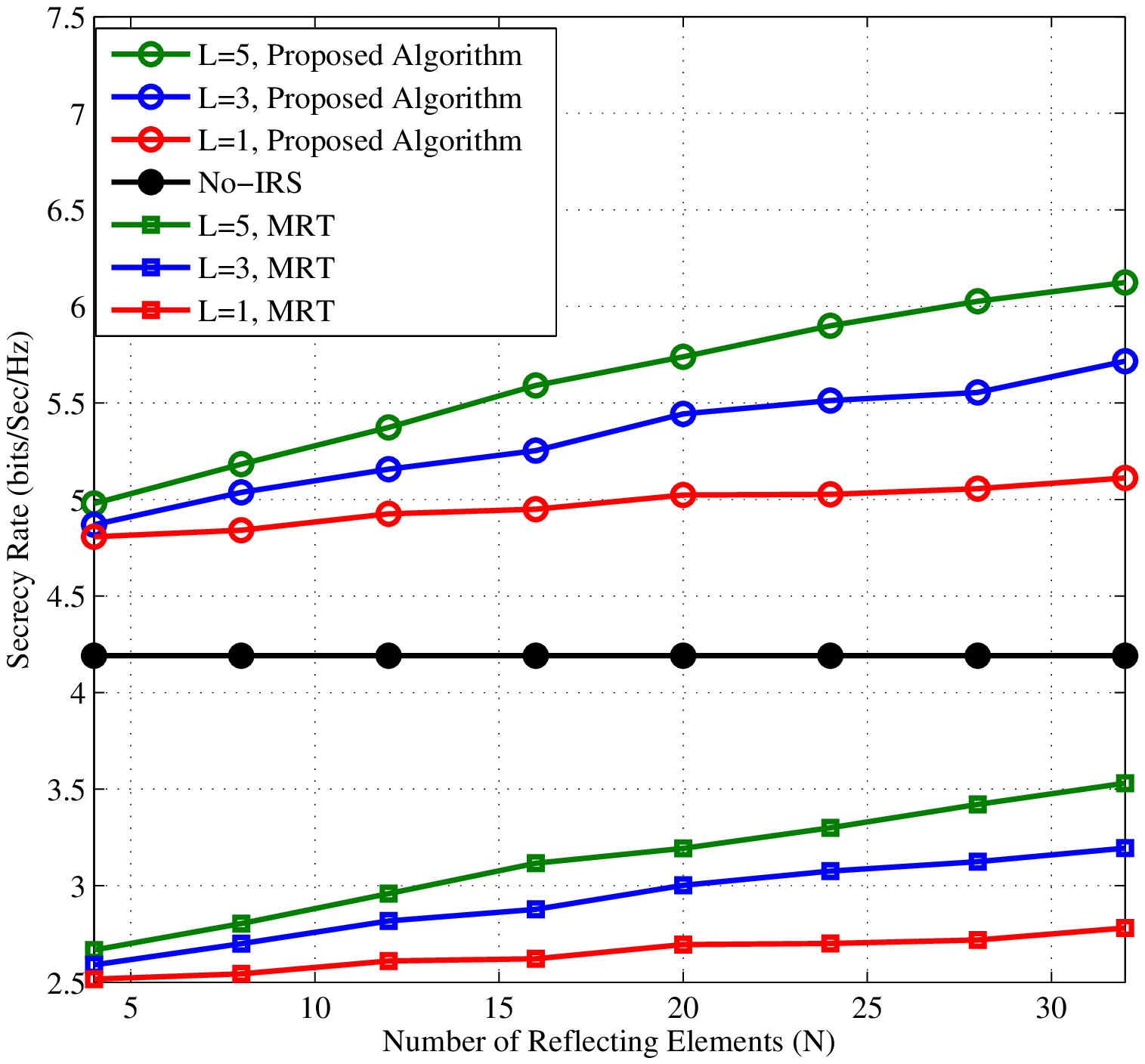}
 	\caption{\textcolor{black}{Secrecy rate versus the number of reflecting elements at each IRS for different number of IRSs.}}
 	\label{fig_1}
 \end{figure}

Fig. 4 illustrates the effect of the number of antennas at the BS on the network secrecy rate for different numbers of IRSs, \textcolor{black}{no-IRS case} and also the MRT method. It can be observed due to the more active beamforming gain of BS at a higher number of antennas, the secrecy rate improves by increasing the number of antennas. \textcolor{black}{Similar to the previous figure, the gap between curves by increasing the number of IRSs is reduced. Also due to the earlier mentioned reasons, the MRT method has the least secrecy rate compared with a system with optimal active beamforming.} 

Fig. 5, shows the secrecy rate versus the number of reflecting elements at the IRSs. 
As it is expected, with the higher number of IRSs elements, a better secrecy rate is attained. Also, add more reflecting elements introduces more effectiveness of the number of IRSs that is understandable from the gap intensification between the curves. \textcolor{black}{Since in no-IRS case there is no reflecting element, secrecy rate is constant by increasing the number of reflecting elements. Another interesting point that is at a higher number of reflecting elements, IRSs play a more important role in system performance and the gap between MRT and the no-IRS case is reduced and this gap among optimal beamforming with IRSs and no-IRS case increases.} 
\begin{figure}[t]
	\centering
	\includegraphics[width=12cm, height=10cm]{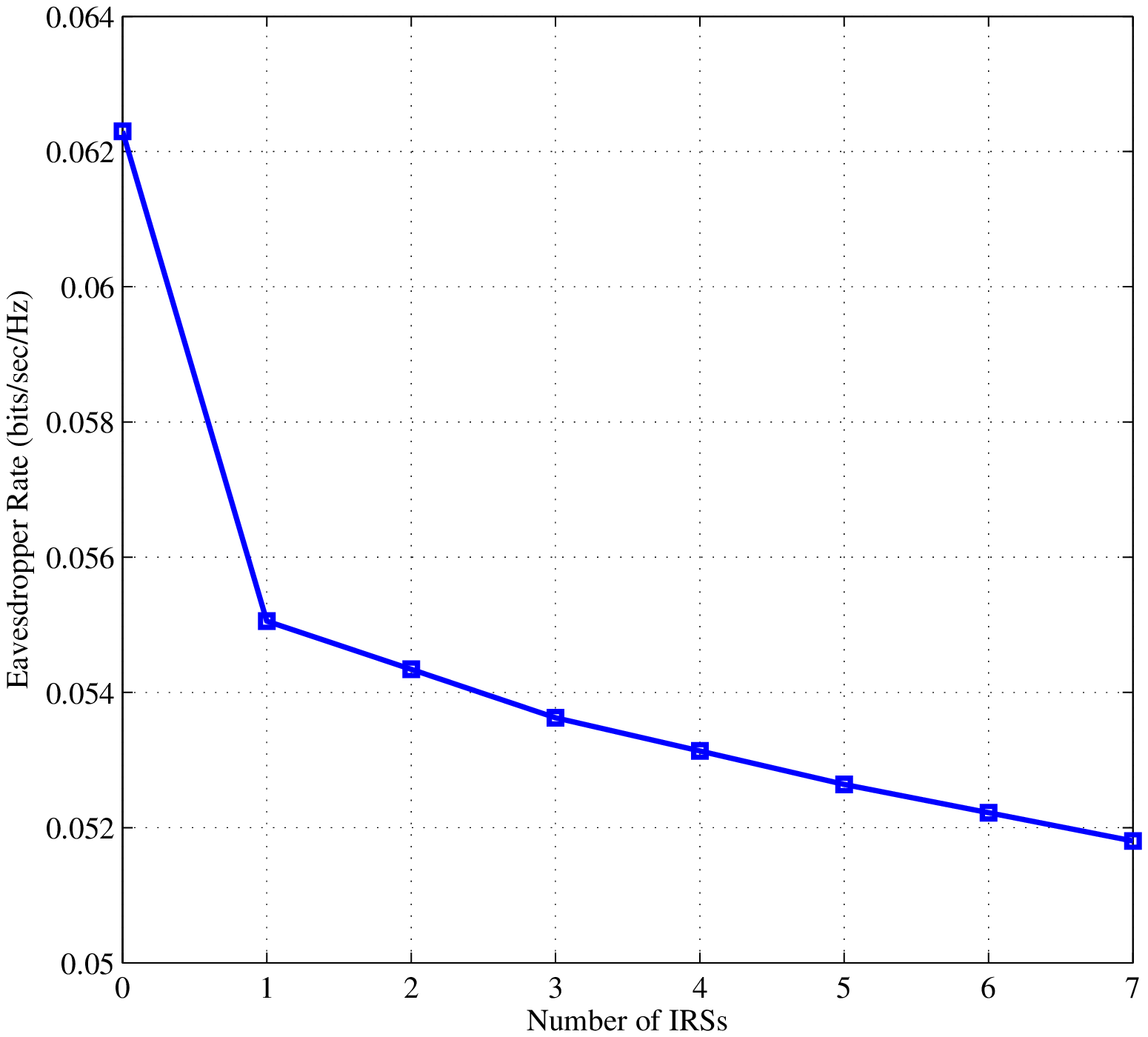}
	\caption{\textcolor{black}{Eavesdropper rate versus the number of IRSs.} }
	\label{fig_1}
\end{figure}	

\textcolor{black}{Fig.6 shows the effect of adding more IRSs on eavesdropper rate. As it is shown in this figure and by considering the previous results, we can find that increasing the number of IRSs not only increases the secrecy rate but also decreases the eavesdropper rate. This decrement is reduced at the higher number of IRSs.} 
\section{CONCLUSIONS}
In this paper, we investigated a multi-IRS mmWave system, where the BS's beamforming vector and the IRSs reflecting elements' phases were jointly optimized to maximize the network secrecy rate. We solved the resulting non-convex optimization problem using a novel method based on alternating technique and SDR method. Simulation results showed that our proposed method outperforms the case that MRT is used as the beamforming vector at the BS. In addition, we demonstrated that adding more IRSs in the network or increasing the number of elements at each IRS can enhance the performance of the mmWave networks in terms of the secrecy rate.   

\bibliographystyle{IEEEtran}
\bibliography{IEEEabrv,library}

\begin{thebibliography}{10}
\providecommand{\url}[1]{#1}
\csname url@samestyle\endcsname
\providecommand{\newblock}{\relax}
\providecommand{\bibinfo}[2]{#2}
\providecommand{\BIBentrySTDinterwordspacing}{\spaceskip=0pt\relax}
\providecommand{\BIBentryALTinterwordstretchfactor}{4}
\providecommand{\BIBentryALTinterwordspacing}{\spaceskip=\fontdimen2\font plus
\BIBentryALTinterwordstretchfactor\fontdimen3\font minus
  \fontdimen4\font\relax}
\providecommand{\BIBforeignlanguage}[2]{{%
\expandafter\ifx\csname l@#1\endcsname\relax
\typeout{** WARNING: IEEEtran.bst: No hyphenation pattern has been}%
\typeout{** loaded for the language `#1'. Using the pattern for}%
\typeout{** the default language instead.}%
\else
\language=\csname l@#1\endcsname
\fi
#2}}
\providecommand{\BIBdecl}{\relax}
\BIBdecl

\bibitem{hs2}
X.~Wang, L.~Kong, F.~Kong, F.~Qiu, M.~Xia, S.~Arnon, and G.~Chen, ``Millimeter
  wave communication: A comprehensive survey,'' \emph{IEEE Communications
  Surveys \& Tutorials}, vol.~20, no.~3, pp. 1616--1653, 2018.

\bibitem{hs3}
Q.~Wu and R.~Zhang, ``Towards smart and reconfigurable environment: Intelligent
  reflecting surface aided wireless network,'' \emph{IEEE Communications
  Magazine}, vol.~58, no.~1, pp. 106--112, 2019.

\bibitem{emil1}
E.~Bj{\"o}rnson and L.~Sanguinetti, ``Power scaling laws and near-field
  behaviors of massive mimo and intelligent reflecting surfaces,'' \emph{IEEE
  Open Journal of the Communications Society}, vol.~1, pp. 1306--1324, 2020.

\bibitem{passive2}
Y.~Han, S.~Zhang, L.~Duan, and R.~Zhang, ``Cooperative double-irs aided
  communication: Beamforming design and power scaling,'' \emph{IEEE Wireless
  Communications Letters}, vol.~9, no.~8, pp. 1206--1210, 2020.

\bibitem{hs5}
P.~Wang, J.~Fang, X.~Yuan, Z.~Chen, and H.~Li, ``Intelligent reflecting
  surface-assisted millimeter wave communications: Joint active and passive
  precoding design,'' \emph{IEEE Transactions on Vehicular Technology}, 2020.

\bibitem{hs4}
Q.~Wu and R.~Zhang, ``Intelligent reflecting surface enhanced wireless network
  via joint active and passive beamforming,'' \emph{IEEE Transactions on
  Wireless Communications}, vol.~18, no.~11, pp. 5394--5409, 2019.

\bibitem{hs6}
Y.~Xiu, Y.~Zhao, Y.~Liu, J.~Zhao, O.~Yagan, and N.~Wei, ``Irs-assisted
  millimeter wave communications: Joint power allocation and beamforming
  design,'' in \emph{2021 IEEE Wireless Communications and Networking
  Conference Workshops (WCNCW)}.\hskip 1em plus 0.5em minus 0.4em\relax IEEE,
  2021, pp. 1--6.

\bibitem{conferance1}
Q.~Wu and R.~Zhang, ``Intelligent reflecting surface enhanced wireless network:
  Joint active and passive beamforming design,'' in \emph{2018 IEEE Global
  Communications Conference (GLOBECOM)}.\hskip 1em plus 0.5em minus 0.4em\relax
  IEEE, 2018, pp. 1--6.

\bibitem{ref4}
S.~Zhang, S.~Zhang, F.~Gao, J.~Ma, and O.~A. Dobre, ``Deep learning optimized
  sparse antenna activation for reconfigurable intelligent surface assisted
  communication,'' \emph{arXiv preprint arXiv:2009.01607}, 2020.

\bibitem{ref5}
M.~Xu, S.~Zhang, C.~Zhong, J.~Ma, and O.~A. Dobre, ``Ordinary differential
  equation-based cnn for channel extrapolation over ris-assisted
  communication,'' \emph{IEEE Communications Letters}, 2021.

\bibitem{ref3}
S.~Zhang, Y.~Liu, F.~Gao, C.~Xing, J.~An, and O.~A. Dobre, ``Deep learning
  based channel extrapolation for large-scale antenna systems: Opportunities,
  challenges and solutions,'' \emph{arXiv preprint arXiv:2102.12859}, 2021.

\bibitem{hs7}
X.~Yu, D.~Xu, and R.~Schober, ``Miso wireless communication systems via
  intelligent reflecting surfaces,'' in \emph{2019 IEEE/CIC International
  Conference on Communications in China (ICCC)}.\hskip 1em plus 0.5em minus
  0.4em\relax IEEE, 2019, pp. 735--740.

\bibitem{hs8}
H.~Shen, W.~Xu, S.~Gong, Z.~He, and C.~Zhao, ``Secrecy rate maximization for
  intelligent reflecting surface assisted multi-antenna communications,''
  \emph{IEEE Communications Letters}, vol.~23, no.~9, pp. 1488--1492, 2019.

\bibitem{hs9}
M.~Cui, G.~Zhang, and R.~Zhang, ``Secure wireless communication via intelligent
  reflecting surface,'' \emph{IEEE Wireless Communications Letters}, vol.~8,
  no.~5, pp. 1410--1414, 2019.

\bibitem{hs10}
X.~Guan, Q.~Wu, and R.~Zhang, ``Intelligent reflecting surface assisted secrecy
  communication: Is artificial noise helpful or not?'' \emph{IEEE Wireless
  Communications Letters}, vol.~9, no.~6, pp. 778--782, 2020.

\bibitem{hs11}
X.~Yu, D.~Xu, Y.~Sun, D.~W.~K. Ng, and R.~Schober, ``Robust and secure wireless
  communications via intelligent reflecting surfaces,'' \emph{IEEE Journal on
  Selected Areas in Communications}, vol.~38, no.~11, pp. 2637--2652, 2020.

\bibitem{xiu2020secrecy}
Y.~Xiu, J.~Zhao, W.~Sun, and Z.~Zhang, ``Secrecy rate maximization for
  reconfigurable intelligent surface aided millimeter wave system with
  low-resolution dacs,'' \emph{IEEE Communications Letters}, 2021.

\bibitem{ref1}
H.~Yang, Z.~Xiong, J.~Zhao, D.~Niyato, L.~Xiao, and Q.~Wu, ``Deep reinforcement
  learning based intelligent reflecting surface for secure wireless
  communications,'' \emph{IEEE Transactions on Wireless Communications}, 2020.

\bibitem{ref2}
H.~Yang, Z.~Xiong, J.~Zhao, D.~Niyato, Q.~Wu, H.~V. Poor, and M.~Tornatore,
  ``Intelligent reflecting surface assisted anti-jamming communications: A fast
  reinforcement learning approach,'' \emph{IEEE Transactions on Wireless
  Communications}, 2020.

\bibitem{hs12}
O.~El~Ayach, S.~Rajagopal, S.~Abu-Surra, Z.~Pi, and R.~W. Heath, ``Spatially
  sparse precoding in millimeter wave mimo systems,'' \emph{IEEE transactions
  on wireless communications}, vol.~13, no.~3, pp. 1499--1513, 2014.

\bibitem{hs13}
M.~R. Akdeniz, Y.~Liu, M.~K. Samimi, S.~Sun, S.~Rangan, T.~S. Rappaport, and
  E.~Erkip, ``Millimeter wave channel modeling and cellular capacity
  evaluation,'' \emph{IEEE journal on selected areas in communications},
  vol.~32, no.~6, pp. 1164--1179, 2014.

\bibitem{SCA}
M.~Razaviyayn, ``Successive convex approximation: Analysis and applications,''
  2014.

\bibitem{gaussian}
Y.~Huang and D.~P. Palomar, ``Rank-constrained separable semidefinite
  programming with applications to optimal beamforming,'' \emph{IEEE
  Transactions on Signal Processing}, vol.~58, no.~2, pp. 664--678, 2009.

\bibitem{polik2010interior}
I.~P{\'o}lik and T.~Terlaky, ``Interior point methods for nonlinear
  optimization,'' in \emph{Nonlinear optimization}.\hskip 1em plus 0.5em minus
  0.4em\relax Springer, 2010, pp. 215--276.

\end{thebibliography}

\end{document}